\def\ps@headings{%
\def\@oddhead{\mbox{}\scriptsize\rightmark \hfil \thepage}%
\def\@evenhead{\scriptsize\thepage \hfil \leftmark\mbox{}}%
\def\@oddfoot{}%
\def\@evenfoot{}}
\newcommand{\F}{\mathbf{F}}
\newcommand{\N}{\mathcal{N}}
\newtheorem{theorem}{\textbf{Theorem}}
\newtheorem{lemma}[theorem]{\textbf{Lemma}}
\newtheorem{definition}[theorem]{\textbf{Definition}}
\newcommand{\nix}[1]{}
\begin{document}
\title{\LARGE{\textbf{Network  Protection Design Using Network Coding}}}
\author{
\authorblockN{Salah A. Aly}
\authorblockA{Department of EE\\ Princeton University\\ salah@princeton.edu}
\and
\authorblockN{Ahmed E. Kamal}
\authorblockA{Department of ECE\\ Iowa State University\\ kamal@iastate.edu}
\and
\authorblockN{Anwar I. Walid}
\authorblockA{Bell Labs, Alcatel-Lucent\\ Murry Hill, NJ \\ anwar@research.bell-labs.com}
 }
\maketitle

\begin{abstract}
Link and node failures  are two common  fundamental problems that affect operational networks.
Protection of communication networks against such failures is essential for maintaining network reliability and performance.
Network protection codes (NPC) are proposed to protect operational networks against link and node failures. Furthermore, encoding and decoding operations of such codes are well
developed over binary and finite fields. Finding network topologies, practical scenarios, and limits on graphs applicable for NPC are of interest. In this paper, we establish limits on network protection design.   We
investigate several network graphs where NPC can be deployed using network coding. Furthermore, we construct graphs with minimum number of edges suitable for network protection codes deployment.
\end{abstract}

\section{Introduction}
With the increase in the capacity of backbone networks, the failure of
a single link or node can result in the loss of a significant amount of
information, which may lead to loss of revenues or even catastrophic failures. Network connections are therefore provisioned with the property that they can survive such edge and node failures. Several techniques have been introduced in the literature to achieve such goal, where either extra resources are added or some of the
available network resources are reserved as backup circuits.
Recovery from failures is also required to be agile in order to
minimize the network outage time.
This recovery usually involves two steps: fault diagnosis  and connections rerouting .
Hence, the optimal network survivability problem is
a multi-objective problem in terms of resource efficiency, operation cost, and agility~\cite{zeng07,zhang06}.

Allowing network relay nodes to encode packets is a novel approach that has attracted much research work from both academia and industry with applications in enterprise networks, wireless communication and storage systems. This approach, which is known as network coding, offers benefits such as minimizing network delay, maximizing
network capacity and enabling security and protection services, see~\cite{soljanin07,yeung06} and references therein. Network coding allows the
sender nodes to combine/encode the incoming packets into one outgoing packet. Furthermore, the receiver nodes are allowed to decode
those packets once they receive enough number of combinations. However,
finding practical network topologies where network coding can be deployed is a
challenging problem.
In order to apply network coding on a network with a large number of
nodes, one must ensure that the encoding and decoding operations are done correctly over binary and finite fields.

There have been several applications for the edge disjoint paths (EDP) and node disjoint paths (NDP)
problems in the literature including network flow, traffic routing, load balancing and optimal network design.  In both cases (edge and vertex disjointness paths), deciding whether
the pairs can be disjointedly connected is NP-complete~\cite{vygen95}.

A network protection scheme against a single link failure using network coding and reduced capacity is shown in~\cite{aly08preprint1}. The scheme is extended to protect against multiple link failures as well as against a single node failure. A protection scheme protects the communication links and network traffic between a group of senders and receivers in a large network with several relay nodes. This scheme is based on what we  call \emph{Network Protection Codes} (NPCs), which are defined in Section~\ref{sec:terminology}. The encoding and decoding operations of such codes are defined in the case of binary and finite fields in~\cite{aly08preprint1,aly08i}. In
this paper, we establish limits on network protection codes  and
investigate several network graphs where NPC can be deployed. In addition, we construct graphs with minimum number of edges to facilitate NPC deployment.


This paper is organized as follows.
In Section~\ref{sec:terminology}  we present  the network model and essential definitions.  In Section~\ref{sec:NPC-minimumedges}, we derive bounds on the minimum number of edges of graphs for NPC, and construct graphs that meet these bounds in Section~\ref{sec:graphconstruction}. Section~\ref{sec:k-connectedgraph} presents limits on certain graphs that are applicable for NPC deployment.

\section{Network Model and NPC Definition}\label{sec:terminology}
In this section we present the network model, define briefly network protection codes, and then state the problem. Further details can be found in~\cite{aly08preprint1}.

\subsection{Network Model}
The network model is described as follows:
\begin{compactenum}[i)]
\item Let $\N$ be a network represented by an abstract graph
    $G=(\textbf{V},E)$, where $\textbf{V}$ is the set of nodes and $E$
    is   set of undirected edges. Let $S$ and $R$ be   sets of independent sources
    and destinations, respectively. The set $\textbf{V}=V\cup S \cup
    R$ contains the relay, source, and destination nodes, respectively.

\item The node can be a router, switch, or an end terminal depending on
    the network model $\N$ and the transmission layer.

\item Let $L$ be a set of links $L=\{L_1,L_2,\ldots,L_k\}$ carrying the data
    from the sources to the receivers.
    All connections have the same bandwidth, otherwise a connection with
    high bandwidth can be divided into multiple connections, each of which
    has a unit capacity. There are exactly $k$ connections. For
    simplicity, we assume that the number of sources is less than or equal
    to the number of links. A sender with a high capacity can divide its
    capacity into multiple unit capacities, each of which has its own link. In other words, \begin{eqnarray}L_i=
    \{(s_i,w_{1i}),(w_{1i},w_{2i}),\ldots,(w_{(\lambda)i},r_i) \},\end{eqnarray} where
    $1\leq i\leq k$ and $(w_{(j-1)i},w_{ji}) \in E$, for some integer $\lambda \geq 1$.
\item The failure on a link $L_i$ may occur due to  network
    circumstances such as a link replacement and overhead. We
    assume that the receiver is able to detect a failure and is able to use a    protection strategy  to recover it. We will use in the rest of the paper the terms edges and links interchangeably
\end{compactenum}
\begin{figure}[t!]
 \begin{center} 
  \includegraphics[height=4.8cm,width=7.5cm]{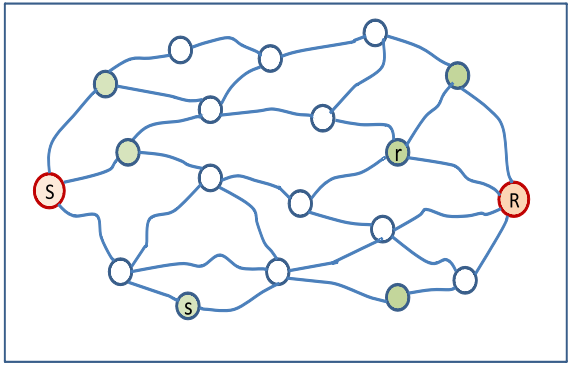}
  \caption{A network model with a super source S and a super receiver R. A set of sources and a set of receivers are shown in green color. We assume the source(s) and receiver(s) share $k$ edge disjoint paths.}\label{fig:npaths}
\end{center}
\end{figure}
\subsection{NPC Definition}
Let us assume a network model $\N$ with $t>1$ path failures in the $k$ working paths, i.e., paths carrying data from source(s) to receiver(s)~\cite{aly08preprint1}.  One can
define a\emph{ network protection code} NPC which protects $k$ edge disjoint paths as shown in
the systematic matrix $G$ in Eq.~(\ref{eq:Gmultiple}). In general, the  systematic matrix $G$
defines the source nodes that will send encoded  messages and source
nodes that will send only plain message without encoding. In order to protect $k$ working paths, $k-t$ connections must carry plain data, and $t$ connections must carry encoded data.


The generator matrix of the NPC for multiple link failures is given by:


\begin{eqnarray}\label{eq:Gmultiple}
G\!\!=\!\! \left[\!\!\!
\begin{array}{c|c}
\begin{array}{cccc}1&0&\ldots&0\!\!\!\\
0& 1&\ldots&0\!\!\!\\
\vdots&\vdots&\vdots&\!\!\!\\
 0&0&\ldots&1\!\!\!\\
 \end{array}\!\!\!&\begin{array}{ccccc}p_{11}\!&\!p_{12}&\ldots&p_{1t}\\
p_{21}\!&\!p_{22}&\ldots&p_{2t}\\
\vdots&\vdots&\vdots&\vdots\\ \!\!\!
p_{k-t,1}&p_{k-t,2}&\ldots&p_{k-t,t}\!\!\\
 \end{array} \\\\
 \multicolumn{2}{c}{}\\ \underbrace{\hskip 0.7in}^{\mbox{ident. $I_{k-t\times~ k-t}$}} &\underbrace{\hspace{0.7in}}^{\mbox{Submatrix } P_{k-t \times t}}\\
\end{array}\!\!\!
  \right]\!,\!\!\!\!
\end{eqnarray}
where $p_{ij} \in \F_q$, and $q\geq k-t+1$, see~\cite{aly08preprint1}.

The matrix $G$ can be rewritten as
\begin{eqnarray}
G= \Big[\mbox{ } I_{k-t} \mbox{ } \mid  \mbox{ } \textbf{P}_{k-t, t} \mbox{ }
\Big],
\end{eqnarray}
where $I_{k-t}$ is the identity matrix and $\textbf{P}$ is the sub-matrix that defines the redundant data
$\sum_{i=1}^{k-t} p_{ij}$ to be sent to a set of sources for the purpose
of protection from
multiple link failures, $1\leq j \leq t$. The matrix $G$ is defined explicitly using Maximum Distance Separable (MDS) optimal codes such as Reed-Solomon (RS) codes~\cite{huffman03,macwilliams77}. Based on the above matrix, every source $s_i$ sends
its own message $x_i$ to the receiver $r_i$ via the link $L_i$. In
addition, $t$ edge disjoint paths out of the $k$ edge disjoint paths will carry encoded data.

\begin{definition}\label{defn:mfailuresCode}
An $[k,k-t]_q$ \emph{network protection code} (NPC) is a k-t dimensional subspace of
the space $\F_q^{k}$ that is able to  recover from $t$  edge disjoint path failures. The code protects $k$ working paths and is defined by the matrix $G$ described in Eq.~\ref{eq:Gmultiple}.
\end{definition}

We say that a Network Protection Code (NPC) is feasible/valid on a graph $G$ if  the encoding and decoding operations
can be achieved over the binary field $\F_2$ or a finite field with $q$ elements $\F_q$~\cite{aly08preprint1}. Also, we ensure  that the set of
senders (receivers) are connected with each other.  We define the feasibility conditions of NPC, and we will look for graphs that satisfy these conditions

\smallskip
\noindent
\begin{definition}[NPC Feasibility (validity)]\label{def:NPCfeasible}
Let $S$ and $R$ be sets of source(s) and receiver(s) in a graph $G$, as shown in Fig.~\ref{fig:npaths}.
We say that the network protection code (NPC) is feasible (valid) for $k$ edge disjoint connections (paths) from $s_i$ in $S$ to $r_i$ in $R$, for $i=1, 2,..,k$,
if
\begin{compactenum}[(i)]
\item between any two sources $s_i$ and $s_j$ in $S$, there is a walk (path)
    $s_i\rightarrow s_j$. This means that the nodes in $S$ share a tree;
\item between any two receivers $r_i$ and $r_j$ in $R$, there is a walk (path)
    $r_i\rightarrow r_j$. This means that the nodes in $R$ share a tree;
\item there are k edge disjoint  paths from $S$ to $R$, and the pairs $\langle s_i,r_i\rangle$ are different edge disjoint paths for all $1\leq i \leq k$.
\end{compactenum}
Therefore, we say the graph $G$ is valid for NPC deployment.
\end{definition}
By Definition~\ref{def:NPCfeasible}, there are $k$ edge disjoint paths in the graph from a set of
$k$ senders to a set of $k$ receivers. This also includes the case in which
a single source sends $k$ different messages through $k$ edge disjoint paths to
$k$ receivers, and vice versa. The feasibility of NPC guarantees
that the encoding operations at the senders and decoding operations at the
receivers can be achieved precisely.

\subsection{Problem Statement}

The max edge-disjoint paths (EDP) problem can be defined as follows.  Let
$G=(V,E)$ be  an undirected graph represented by a set of nodes (network
switches, routers, hosts, etc.), and a set of edges (network links, hops,
single connection, etc.).  Assume all edges have the same unit distance,
and they are alike regarding the type of connection that they represent. A
path from a source node  $u$ to a destination node $v$ in $V$ is
represented by a set of edges in $E$. Put differently,
\begin{eqnarray}
\langle u,v\rangle=\{(u,w_1),(w_1,w_2),\ldots,(w_j,v) \mid \nonumber \\ u,v, w_i \in V, ~~~ (w_i,w_{i+1}) \in E \}.
\end{eqnarray}

\noindent {\bf Problem 1.} Given $k$ senders and $k$ receivers.  We can
define  a commodity problem, aka, $k$ edge disjoint paths as follows.
Given a network with a set of nodes $V$, and a set of links $E$, provision
the edge disjoint paths to guarantee the encoding and decoding operations of
NPC.

\begin{compactenum}[i)]
\item
Provision the $k$ edge disjoint paths in $G$. Let
\begin{eqnarray}
L&=&\{L_1,L_2,\ldots,L_k\} \nonumber \\ &=& \{\langle s_j,r_j\rangle \mid~ \forall ~~j=1,\ldots,k, ~ s_i \neq r_i \in V\}
\end{eqnarray} be  a set of commodities.
\item The set of sources $S=\{s_1,\ldots,s_k\}$ are connected with each
    other, and the set of receivers $R=\{r_1,\ldots,r_k\}$ are also
    connected with each other as shown in Definition~\ref{def:NPCfeasible}.
\end{compactenum}
The set $L$  is realizable in $G$ if there exists
mutually edge-disjoint paths from $s_i$ to $r_i$ for all
$i=1,\ldots,k$. Finding the set $L$ in a given arbitrary graph $G$ is an
NP-complete problem as it is similar to edge-disjoint Menger's Problem
and unsplittable flow~\cite{blesa04}.

\noindent \textbf{Problem 2.} Given positive integers $n$ and $k$, and  an NPC with $k$ working
paths, find a $k$-connected n-vertex graph $G$ having the smallest possible
number of edges. This graph by construction must have $k$ edge disjoint paths
and represents a network which satisfies NPC. This problem will be addressed in Section~\ref{sec:graphconstruction}.

\section{NPC and Optimal Graph Construction with Minimum Edges}\label{sec:NPC-minimumedges}
One might ask what  the minimum number of edges on a graph is, in which Network Protection Codes (NPC) is feasible/valid  as stated in Definition~\ref{def:NPCfeasible}. We will answer this question in two cases: (i) the set of sources and receivers are predetermined (preselected from the network nodes), (ii) the sources and receivers are chosen arbitrarily.

\subsection{Single Source and Multiple Receivers}
We  consider the case of a single source and multiple receivers in an arbitrary graph with $n$ nodes.
\begin{lemma}
Let $G$ be a connected graph representing a network with $n$ total nodes, among them a single source node  and $k$ receiver nodes. Assume a NPC from the source node to the multiple receivers is applied. Then, the minimum number of edges required to construct the graph $G$ is given by
\begin{eqnarray}
n+k-2.
\end{eqnarray}
\end{lemma}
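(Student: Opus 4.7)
My plan is to prove the lemma by matching an explicit construction against a counting lower bound. A crucial preliminary observation concerns clause~(ii) of Definition~\ref{def:NPCfeasible}: the phrase ``the nodes in $R$ share a tree'' has to be read as requiring that the $k$ receivers be joined by a subtree whose edges are \emph{disjoint} from the $k$ working paths of clause~(iii). If this tree were allowed to re-use the data-carrying edges, the NPC recovery machinery---which relies on receivers exchanging coded symbols on their own medium---could not operate, and the bound would not be tight. Throughout, let $s$ be the source, $r_1,\ldots,r_k$ the receivers, and $v_1,\ldots,v_{n-k-1}$ the relays.

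For the upper bound I would exhibit a graph $G$ on $n$ nodes with exactly $n+k-2$ edges. Include the $k$ direct edges $(s,r_i)$ for $1\le i\le k$; these are $k$ trivially edge-disjoint paths of length $1$, fulfilling clause~(iii). Include the $k-1$ edges $(r_i,r_{i+1})$ for $1\le i\le k-1$, which form a path on the receivers and serve as the receiver tree required by clause~(ii). Finally, attach the relays as a chain by taking $(s,v_1)$ together with $(v_j,v_{j+1})$ for $1\le j\le n-k-2$, using $n-k-1$ new edges. The total is $k+(k-1)+(n-k-1)=n+k-2$, the graph is connected, and clauses (i)--(iii) all hold by inspection.

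For the lower bound, let $P_1,\ldots,P_k$ be any $k$ edge-disjoint paths from $s$ to the receivers, of lengths $\ell_1,\ldots,\ell_k\ge 1$, and let $T$ be any receiver tree edge-disjoint from the $P_i$, consisting of $k-1+\delta$ edges where $\delta\ge 0$ counts relay nodes used as Steiner points in $T$. Let $U=T\cup\bigcup_{i=1}^k P_i$. Edge-disjointness gives $|E(U)|=\sum_{i=1}^k\ell_i+(k-1)+\delta$. The vertex set of $U$ contains $s$ and the $k$ receivers, together with at most $\sum_{i=1}^k(\ell_i-1)+\delta$ additional relays (the interior vertices of the paths and the Steiner points of $T$); any coincidence among these only shrinks $|V(U)|$. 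Therefore $|V(U)|\le 1+k+\sum_{i=1}^k(\ell_i-1)+\delta=1+\sum_{i=1}^k\ell_i+\delta$. Since $G$ is connected and $U$ is connected, each of the $n-|V(U)|$ vertices outside $U$ must be linked to $V(U)$ by at least one edge not in $E(U)$, so
\begin{equation*}
|E(G)|\ \ge\ |E(U)|+\bigl(n-|V(U)|\bigr)\ \ge\ \Bigl(\sum_{i=1}^k\ell_i+k-1+\delta\Bigr)+\Bigl(n-1-\sum_{i=1}^k\ell_i-\delta\Bigr)\ =\ n+k-2,
\end{equation*}
which matches the construction.

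The main obstacle is really the bookkeeping in the lower bound rather than any single hard step: I must verify that the vertex-count inequality $|V(U)|\le 1+\sum_{i=1}^k\ell_i+\delta$ survives every form of degeneracy (paths sharing interior relays while remaining edge-disjoint, a receiver appearing as an interior vertex of another path or as a Steiner point of $T$, the tree $T$ meeting path vertices, and so on). In each such case $|V(U)|$ can only decrease while $|E(U)|$ stays the same, so the bound only tightens. As a sanity check, the case $k=1$ collapses to $|E(G)|\ge n-1$, which is exactly the number of edges in a spanning tree---consistent with the fact that there is no nontrivial receiver tree to build when a single receiver is present.
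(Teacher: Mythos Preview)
Your argument is correct and follows the same decomposition as the paper---$k$ working paths, a $(k-1)$-edge receiver tree, and the remaining $n-k-1$ relays---summing to $n+k-2$. Your treatment is in fact more careful than the paper's: the paper tacitly assumes each $s\!\to\! r_i$ path is a single edge, whereas you allow arbitrary path lengths $\ell_i$ and Steiner points in the receiver tree and show via the inequality $|V(U)|\le 1+\sum_i\ell_i+\delta$ that the extra edges consumed by longer paths are exactly offset by the relays they absorb, so the bound $n+k-2$ survives in full generality.
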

\begin{proof}
The graph $G$ contains a single source, $k$ receiver nodes, and $n-k-1$ relay nodes (nodes that are not sources or receivers). To apply NPC, we must have $k$ edge disjoint paths from the source to the $k$ receivers. Also, all receivers must be connected by a tree with a minimum of $k-1$ edges. The remaining $n-k-1$ relay nodes in G can be connected with at least $n-k-1$ edges. Therefore, the minimum number of nodes required to construct the graph G is given by
\begin{eqnarray}
k+(k-1)+(n-k-1).
\end{eqnarray}
\end{proof}

\subsection{Multiple Sources and Multiple Receivers}

\begin{lemma}
Let $G$ be a connected graph with $n$ nodes, and predetermined $k$ sources and $k$ receivers. Then, the minimum number of edges required for predetermined $k$ edge-disjoint paths for a feasible NPC solution on G is given by
\begin{eqnarray}
E_{min}=n+k-2
\end{eqnarray}
\end{lemma}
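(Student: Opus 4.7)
The plan is to establish the equality in two matching halves, mirroring the proof of the previous single-source lemma but with one additional term of size $k-1$ coming from the source tree.

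For the lower bound, I partition the $n$ vertices into three disjoint classes by role: the $k$ source vertices $S$, the $k$ receiver vertices $R$, and the remaining $n-2k$ relay vertices. I then attribute the edges of any feasible graph $G$ to four pairwise-disjoint categories, one for each ingredient demanded by Definition~\ref{def:NPCfeasible} together with the assumption that $G$ is connected. Clause (i) of the feasibility definition forces at least $k-1$ edges lying with both endpoints in $S$ (the source tree); clause (ii) symmetrically forces another $k-1$ edges with both endpoints in $R$ (the receiver tree); clause (iii) requires $k$ edge-disjoint $\langle s_i, r_i\rangle$ paths, and since each such path must contain at least one edge in the cut $\bigl(S, V\setminus S\bigr)$, edge-disjointness makes these $k$ cut edges distinct and automatically disjoint from the two intra-cluster trees; finally, connectedness of $G$ forces each of the $n-2k$ relay vertices to be incident to at least one edge, contributing $n-2k$ further edges not already accounted for. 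Summing the four disjoint categories yields $|E(G)| \ge (k-1)+(k-1)+k+(n-2k) = n+k-2$.

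For the matching upper bound I will exhibit an explicit graph $G^*$ on $n$ vertices with exactly $n+k-2$ edges satisfying all NPC conditions. I plan to lay the sources $s_1,\dots,s_k$ on a path ($k-1$ edges, giving the source tree), lay the receivers $r_1,\dots,r_k$ on a second path ($k-1$ edges), insert the $k$ direct cross-edges $(s_i,r_i)$ (providing $k$ trivially edge-disjoint paths), and attach each of the $n-2k$ relay vertices by a single pendant edge to $s_1$. A brief verification confirms that $G^*$ is connected, that the source and receiver subgraphs are trees, and that the $k$ cross-edges realize $k$ edge-disjoint $\langle s_i,r_i\rangle$ paths, so $G^*$ is NPC-feasible and uses exactly $n+k-2$ edges.

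The step I expect to be most delicate is justifying disjointness of categories (c) and (d) in the lower bound under an adversarial configuration that routes an edge-disjoint path through a relay: one edge could a priori serve simultaneously as a relay's attachment edge and as the $S$-leaving edge of an EDP. My planned fix is to observe that whenever an EDP visits a relay, that relay must acquire a second incident edge to continue toward $R$, so the relay's overall incidence is at least two and the accounting is preserved. A companion observation is that replacing a direct edge $(s_i,r_i)$ by a longer alternating path through relays can only increase the total edge count, never decrease it, which seals the lower bound and confirms the claimed equality $E_{\min}=n+k-2$.
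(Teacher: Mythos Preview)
Your proposal is correct and is actually argued more carefully than the paper's own proof. The paper does not separate a lower bound from an upper bound: it simply writes down a generic feasible configuration, lists its edge classes --- $k-1$ for the source tree, $k-1$ for the receiver tree, $k+\sum_i \ell_i$ for the $k$ paths carrying $\ell_i$ internal relays each, and $n-2k-\sum_i \ell_i$ for the leftover relays --- and sums to $n+k-2$, tacitly treating that construction count as the minimum. You make the two directions explicit and replace the per-path length count by the cleaner observation that each edge-disjoint $\langle s_i,r_i\rangle$ path contributes a distinct edge across the cut $(S,V\setminus S)$; the $\sum_i \ell_i$ terms that the paper carries and then cancels are absorbed into your relay category, so the two accountings agree. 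The overlap worry you raise between categories (c) and (d) is real and is precisely what the paper glosses over; your fix --- a relay lying on an EDP necessarily has a second incident edge, and routing through relays only lengthens paths --- is the right observation and, combined with connectedness of $G$, suffices to rule out double counting. What you gain over the paper is an honest lower-bound argument; what the paper's parametrisation by $\ell_i$ buys is that the same count simultaneously describes every minimal construction, not just the pendant-relay one you exhibit.
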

\begin{proof}
We proceed the proof by constructing the graph $G$ with a total number of nodes $n$ and $k$ sources (receivers).
\begin{compactenum}[i)] \item There are $k$ sources that require a $k-1$ edges represented by a tree.  There are $k$ receivers that require a $k-1$ edges represented by a tree. \item Assume every source node $s_i$ is connected with a receiver node $r_i$ has  an $l_i$ nodes in between for all $1\leq i \leq k$. Therefore,  there are $l_i+1$ edges in every edge-disjoint path, and hence the  number of edges from the sources to receivers is given by $k+\sum_{i=1}^k l_i$.

\item Assume an arbitrary node $u$ exists in the graph $G$, then this node can be connected to a source (receiver) node or to another relay node. In either case, one edge is required to connect this node $u$ to at least one node in $G$. Hence, the number of edges required for all other relay nodes is given by $n-(2k+\sum_{i=1}^k l_i)$. \item Therefore, the total number of edges is given by \begin{eqnarray}E_{min}\!\!&=&\!\!2(k-1)+\big(k+\sum_{i=1}^kl_i\big)+\big(n-(k+\sum_{i=1}^k l_i)\big) \nonumber \\&=&n+k-2\end{eqnarray} \end{compactenum}
\end{proof}

In the previous Lemma, we assume that the $k$ sources and $k$ receivers can be predetermined to minimize the number of edges on $G$. In  Lemma~\ref{lem:minedgesK}, we assume that the sources and receivers can be chosen arbitrarily among the $n$ nodes of $G$.
\begin{lemma}\label{lem:minedgesK}
Let $G=(V,E)$ be a connected graph with  $n$ nodes, and arbitrarily chosen  $k$ sources and $k$ receivers. Then, the minimum number of edges required for any $k$ edge-disjoint paths for a feasible NPC solution on $G$ is given by
\begin{eqnarray}
E_{min}=\lceil n (n-k+1)/2 \rceil
\end{eqnarray}
\end{lemma}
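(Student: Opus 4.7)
The plan is to establish the claimed equality by proving a matching lower and upper bound.  For the lower bound, I would first show that every vertex $v \in V$ must have $\deg(v) \ge n-k+1$.  The argument exploits the adversarial nature of the source/receiver selection: suppose instead $\deg(v) \le n-k$, so that $v$ has at least $k-1$ non-neighbors; together with $v$ these form a vertex set $W$ of size at least $k$ whose edge-boundary in $G$ is at most $n-k < k$.  Choosing $v$ and its $k-1$ non-neighbors as sources, and placing the $k$ receivers across the cut so that condition~(iii) of Definition~\ref{def:NPCfeasible} demands $k$ edge-disjoint paths crossing this boundary under a particular source/receiver pairing, one reaches a contradiction via Menger's theorem.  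Summing the degree bound over all vertices and applying the handshake lemma then gives
\begin{equation*}
2|E| = \sum_{v\in V}\deg(v) \;\ge\; n(n-k+1),
\end{equation*}
so $|E|\ge \lceil n(n-k+1)/2\rceil$, with the ceiling absorbing any parity mismatch in $n(n-k+1)$.

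For the upper bound I would exhibit an $(n-k+1)$-regular graph on $n$ vertices --- or, when $n(n-k+1)$ is odd, a nearly regular graph with exactly one vertex of degree $n-k+2$ --- for instance a Harary-type graph $H_{n-k+1,\,n}$.  Such a graph has precisely $\lceil n(n-k+1)/2\rceil$ edges and is $(n-k+1)$-edge-connected; being connected it automatically satisfies conditions~(i) and~(ii) of Definition~\ref{def:NPCfeasible}, while condition~(iii) follows from Menger's theorem applied to any disjoint source/receiver sets, since $n-k+1\ge k$ in the natural regime $n\ge 2k-1$.

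The main obstacle I anticipate is the rigorous realization of the degree bound.  Condition~(iii) of Definition~\ref{def:NPCfeasible} requires not merely $k$ edge-disjoint paths between $S$ and $R$ viewed as supernodes, but a specific matching $\langle s_i, r_i\rangle$ for $1\le i\le k$.  Accordingly, the adversarial instance around a low-degree vertex $v$ must be constructed explicitly, and the min-cut/Menger argument must be refined to account for the pairing constraint rather than the weaker set-to-set connectivity.  The parity handling in the ceiling and the construction of an almost-regular graph when $n(n-k+1)$ is odd are routine by comparison.
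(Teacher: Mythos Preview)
Your high-level strategy coincides with the paper's: argue that every vertex must have degree at least $n-k+1$ and then apply the handshake identity to obtain $|E|\ge \lceil n(n-k+1)/2\rceil$.  The paper's own proof is extremely terse on this point --- it simply asserts that ``every source node must have a node degree of $(n-k+1)$'' because it ``must be connected to some relay nodes which are not receivers,'' extends this to all $n$ vertices, and sums degrees.  It offers no matching construction inside the proof; the Harary graph $H_{n-k+1,n}$ you propose for the upper bound is exactly what the paper uses, but only later in Section~\ref{sec:graphconstruction}.  In that respect your plan is more complete than the paper's proof of this lemma.

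That said, the specific cut argument you sketch for the degree lower bound does not go through.  If $\deg(v)\le n-k$ and you set $W=\{v\}\cup\{\text{non-neighbours of }v\}$, then $V\setminus W=N(v)$ and the edge-boundary of $W$ is $\deg(v)$ \emph{plus} all edges between the non-neighbours of $v$ and $N(v)$; the latter term is not controlled, so $|\partial W|$ need not be at most $n-k$.  Independently, the inequality $n-k<k$ you invoke forces $n<2k$, which conflicts with the regime $n\ge 2k-1$ you later assume for the upper bound (and indeed with any instance having disjoint source and receiver sets).  You are right to flag this step as the crux: neither your cut sketch nor the paper's one-line justification actually explains why an \emph{arbitrary} vertex is forced to degree $n-k+1$ under Definition~\ref{def:NPCfeasible}, since condition~(iii) only routes one path per pair $\langle s_i,r_i\rangle$ through the cut $\{v\}$ when $v$ is an endpoint.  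Making the adversarial placement and the accompanying cut bound rigorous is genuinely the missing piece, and the paper does not supply it either.
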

\begin{proof}
In general, assume there are $k$ connection paths, and the source and destination nodes do not share direct connections. In this case, every source node $s_i$ in $V$ must be connected to some relay nodes which are not receivers (destinations). Therefore, every source node must have a node degree of $(n-k+1)$. This agrement is also valid for any receiver node $r_i$ in $V$. If we consider all $n=|V|$ nodes in the graph $G$, hence the total minimum number of edges $E$ must be:
\begin{equation}
\lceil n(n-k+1)/2  \rceil
\end{equation}
The ceiling value comes from the fact that both $n$ and $(n-k+1)$ should not be odd.
\end{proof}

\section{Edge Disjoint Paths in $k$-connected and Regular Graphs}\label{sec:k-connectedgraph}
  In this section, we look for certain graphs where NPC is feasible according to Definition~\ref{def:NPCfeasible}. We will consider two cases: single source single receiver and  single source multiple receivers.   We derive bounds on the cases of $\kappa_e(G)$ and
$\kappa_v(G)$ connectivity in a $k$-connected graph $G$, the definitions are states in the Appendix Section.

\subsection{Single Source to Single   and Multiple Receivers}

Whitney~\cite[Theorem 5.3.6]{gross99} showed that the $k$-connected graph must have $k$ edge disjoint paths
between any two pair of nodes as shown in the following
Theorem.

\begin{theorem}[Whitney 1932]\label{th:whitneytheorem}
A nontrivial graph $G$ is $k$-connected if and only if for each pair $u,
v$ of vertices, there are at least $k$ internally edge disjoint $\langle u,v \rangle$ paths in
$G$.
\end{theorem}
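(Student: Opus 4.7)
The plan is to prove this biconditional by treating the two directions separately, with the forward implication reducing to Menger's theorem. The phrase ``internally edge disjoint'' is somewhat non-standard; I will read it as asserting that between any pair $u,v$ there exist at least $k$ paths that share no internal vertex (and hence no edge), which is the classical Whitney formulation associated with vertex $k$-connectivity. If instead the edge-disjoint reading is intended, the same outline applies using the edge version of Menger's theorem.

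For the reverse direction ($\Leftarrow$), the argument is immediate: if every pair of vertices admits $k$ pairwise internally disjoint paths, then no set of fewer than $k$ vertices can meet all such paths simultaneously. Consequently, removing at most $k-1$ vertices from $G$ leaves at least one $u$-$v$ path intact for every $u,v$, so $G$ remains connected. By definition this says $\kappa_v(G) \ge k$, i.e., $G$ is $k$-connected.

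For the forward direction ($\Rightarrow$), fix an arbitrary pair $u,v \in V(G)$. If $u$ and $v$ are non-adjacent, apply Menger's theorem directly: the maximum number of internally disjoint $u$-$v$ paths equals the minimum size of a $u$-$v$ vertex separator, and $k$-connectivity forces the latter to be at least $k$. If $u$ and $v$ are adjacent, pass to $G' = G - uv$ and argue that $G'$ still contains $k-1$ internally disjoint $u$-$v$ paths; restoring the edge $uv$ as the $k$-th (trivial) path then yields the required collection. The existence of the $k-1$ paths in $G'$ can be obtained either by showing $G'$ is $(k-1)$-connected and reducing to the non-adjacent case, or by a direct augmenting-path argument on a partial system of internally disjoint paths in $G'$.

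The main obstacle, and the step I expect to absorb most of the work, is Menger's theorem itself. A clean route is induction on $|E(G)|$: take a minimum $u$-$v$ separator $T$ with $|T|$ equal to the local connectivity, partition the graph into the ``$u$-side'' $H_u$ and ``$v$-side'' $H_v$ meeting in $T$, contract one side to a single vertex in each reduced instance, and apply the induction hypothesis to each. Splicing the resulting internally disjoint path systems together through $T$ produces the desired paths in $G$. The delicate points are verifying that the two reduced instances have strictly fewer edges (which forces a case split when $T \cup \{u\}$ or $T \cup \{v\}$ already spans $V(G)$, treated by a short direct argument) and handling adjacent endpoints without double-counting the edge $uv$. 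Once Menger's theorem is established, quantifying over all pairs $u,v$ gives Theorem~\ref{th:whitneytheorem}.
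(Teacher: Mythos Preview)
Your outline is a correct and standard route to Whitney's theorem via Menger's theorem, and your caveat about the ambiguous phrase ``internally edge disjoint'' is well taken (the paper's appendix actually defines it as ordinary edge-disjointness, which makes the biconditional with \emph{vertex} $k$-connectivity false as stated; the classical Whitney statement you prove, with internally vertex-disjoint paths, is the correct one).

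However, there is nothing to compare against: the paper does not give its own proof of this theorem. It is quoted as a known result with a citation to \cite[Theorem~5.3.6]{gross99} and then used as a black box in the discussion leading to Lemma~\ref{lem:kconnectedNPCfeasible}. So your proposal is not an alternative to the paper's argument but rather a supplement the paper chose to omit.
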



Theorem~\ref{th:whitneytheorem} establishes conditions for $k$ edge disjoint paths in a $k$-connected graph $G$. In order to make NPC feasible in a k-connected graph $G$, we require more two conditions according to Definition~\ref{def:NPCfeasible}: all receivers are connected with each others, as well as all source(s) are connected with each other.

\begin{lemma}\label{lem:kconnectedNPCfeasible}
Let $G$ be a non-trivial graph with a source node $s$ and a receiver node
$r$. Then, the NPC has a feasible solution with at least $k$ edge disjoint paths \textbf{if and only if}
$G$ is a $k$-edge connected graph.
\end{lemma}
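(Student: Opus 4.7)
The plan is to reduce the statement to the edge-disjoint analogue of Menger's theorem, which is the natural counterpart of the vertex-disjoint Whitney result already invoked as Theorem~\ref{th:whitneytheorem}. Before doing so, I would first check that conditions (i) and (ii) of Definition~\ref{def:NPCfeasible} are vacuous in this single-source/single-receiver setting: with $|S|=|R|=1$ there is no distinct pair of sources (or of receivers) whose mutual reachability needs to be verified. Consequently, feasibility of NPC between $s$ and $r$ is equivalent to condition (iii) alone, namely the existence of $k$ edge-disjoint paths $\langle s,r\rangle$.

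For the $(\Leftarrow)$ direction, I would assume $\kappa_e(G)\ge k$ and apply Menger's edge theorem (equivalently max-flow min-cut with unit edge capacities): every edge cut separating $s$ from $r$ has cardinality at least $\kappa_e(G)\ge k$, and this minimum $\langle s,r\rangle$-cut size equals the maximum number of edge-disjoint $\langle s,r\rangle$-paths. Hence at least $k$ edge-disjoint paths exist, supplying the $k$ working connections $L_1,\ldots,L_k$ required by the NPC construction of Eq.~(\ref{eq:Gmultiple}).

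For the $(\Rightarrow)$ direction, I would start from a collection of $k$ edge-disjoint $\langle s,r\rangle$-paths and argue that any edge cut separating $s$ from $r$ must intersect each of them in at least one edge, so has size at least $k$; the same Menger/max-flow identity then yields $\kappa_e(G)\ge k$. The arithmetic is routine, but a short lemma-level remark is useful to justify that edges shared between two paths do not occur, by the edge-disjointness hypothesis.

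The main obstacle is the slight mismatch between a \emph{global} property ($k$-edge connectivity of all of $G$) and a \emph{local} property ($k$ edge-disjoint paths between the prescribed pair $s,r$). I would handle this by reading the statement in the customary Menger sense: the local $s$--$r$ edge connectivity is what is actually being characterized, and it coincides with global $k$-edge connectivity whenever the statement is required to hold for every choice of source--receiver pair. This is also consistent with the subsequent extension to multiple sources and receivers, where the full global hypothesis will be needed to simultaneously realise all $k$ pairs $\langle s_i,r_i\rangle$ as edge-disjoint paths.
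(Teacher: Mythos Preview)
Your proposal is correct and, in fact, more careful than the paper's own argument, but it does not follow the same route. The paper's proof of the $(\Leftarrow)$ direction does not invoke Menger or max-flow/min-cut at all; instead it argues directly that $k$-edge connectivity forces every vertex, in particular $s$, to have degree at least $k$ (else fewer than $k$ deletions would disconnect it), and then asserts that the $k$ edges at $s$ initiate $k$ internally edge-disjoint $\langle s,r\rangle$-paths. This minimum-degree argument is elementary but, as written, leaves a gap: large minimum degree by itself does not immediately yield many edge-disjoint $s$--$r$ paths without an appeal, explicit or implicit, to Menger. Your approach closes this gap cleanly by citing the edge version of Menger/max-flow directly.

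For the $(\Rightarrow)$ direction the two arguments are closer in spirit: both pass from the $k$ edge-disjoint paths to a lower bound on the relevant cut size. The paper phrases this as ``for each pair of vertices $s$ and $r$'' without flagging that the hypothesis only supplies paths for one fixed pair; you explicitly identify this local/global mismatch and resolve it by interpreting the lemma as a statement about the local $\langle s,r\rangle$-edge connectivity (equivalently, as a statement meant to hold for every choice of the pair). That reading is the correct one, and making it explicit is a genuine improvement over the paper's presentation. Your preliminary observation that conditions (i) and (ii) of Definition~\ref{def:NPCfeasible} are vacuous when $|S|=|R|=1$ is also a point the paper leaves implicit.
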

\begin{proof}
First, we know that if $G$ is $k$-edge connected, then for each pair $s$ and $r$ of
vertices, the degree of each node must be at least $k$. If not, then
removing any number of edges less than $k$ will disconnect the graph, and
this contradicts the k-edge connectivity assumption. Each node connected with
$s$ will be a starting  path to $r$ or to another node in the graph.
Consequently, every node must have a degree of at least $k$, and must have
a path to $r$. Therefore, there are at least $k$ internally edge disjoint $s-r$
paths in $G$. Hence, NPC is  feasible by considering at least $k$ edge disjoint paths.

Assume that NPC has a feasible(valid) solution for $k^\prime \geq k$, then there must exist $k^\prime \geq k$ edge
disjoint paths in $G$. Then, for each pair of vertices $s$ and $r$, there
are at least $k$ internally edge disjoint paths such that  $\kappa_s(G), \kappa_r(G) \geq k$, for
each $s$ and $r$ non-adjacent nodes. Therefore, the graph $G$  is
$k$-connected.

\end{proof}

Let $s$ be a source in the network model $\N$ that sends $k$ different data stream to $k$ receivers denoted by $R$. We need to infer conditions for all $R$ receivers to be connected with each other, and there are $k$ edge disjoint paths from $s$ to $R$. In this case, NPC will be feasible in the abstract graph $G$ representing the network $\N$.

\begin{theorem}\label{lem:1smultipleR}
Let $G$ be a $k$-edge connected graph with a Hamiltonian cycle, and $s, r_1,r_2,\ldots,r_k$ be any $k+1$
distinct nodes in $G$. Then,
\begin{compactenum}[i)]
\item there is a path $L_i$ from $s$ to $r_i$, for $i=1,\ldots,k$, such
    that the collection $\{L_1,L_2,\ldots,L_k\}$ are internally edge disjoint paths,
\item the nodes in the set $R$ are connected with each other.
\end{compactenum}
Therefore, NPC is feasible in the graph $G$.
\end{theorem}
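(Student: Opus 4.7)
My plan is to dispose of part (ii) immediately from the Hamiltonian cycle hypothesis, and then reduce part (i) to a single application of Menger's theorem on an auxiliary graph with an added super-receiver.

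For part (ii): a Hamiltonian cycle $C$ of $G$ passes through every vertex, so in particular all of $r_1,\ldots,r_k$ lie on $C$. Deleting any single edge of $C$ gives a Hamiltonian path that connects all the receivers, so one can even extract a spanning tree of $R$ as a subgraph of $C$. This immediately verifies condition (ii) of Definition~\ref{def:NPCfeasible} for the receivers. The sources condition is vacuous since there is only one source. Note that the Hamiltonian hypothesis is used only here; part (i) will be proved using only $k$-edge-connectivity.

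For part (i): construct an auxiliary graph $G'$ by adding a new vertex $t$ and $k$ new edges $e_i=(r_i,t)$, one to each receiver. I claim every $s$-$t$ edge cut in $G'$ has size at least $k$; by Menger's theorem this gives $k$ edge-disjoint $s$-$t$ paths in $G'$, and since exactly $k$ edges are incident to $t$, each path must use a distinct $e_i$. Stripping off the final edge $e_i$ then yields the desired internally edge-disjoint paths $L_i$ from $s$ to $r_i$ in $G$. To verify the cut bound, take any partition $(X,\bar X)$ of $V(G')$ with $s\in X$ and $t\in\bar X$, and split the receivers as $I=\{i:r_i\in X\}$ and $J=\{i:r_i\in\bar X\}$, with $j=|I|$. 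The cut edges consist of the $j$ new edges $e_i$ for $i\in I$, together with the $G$-edges crossing $(X,\bar X\setminus\{t\})$. If $J\neq\varnothing$, then $\bar X\setminus\{t\}$ is a nonempty proper subset of $V(G)$, so $k$-edge-connectivity of $G$ forces at least $k$ $G$-edges to cross, giving a total of at least $k+j\geq k$. If $J=\varnothing$, then $j=k$ and the $k$ new edges alone saturate the bound.

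The only real step to get right is the case analysis above; everything else is routine. The potential pitfall is the subcase $\bar X\setminus\{t\}=\varnothing$ (i.e.\ $J=\varnothing$), where one cannot invoke $k$-edge-connectivity of $G$ directly and must instead count the new edges; I have handled this separately. With the cut bound established, Menger applied in $G'$ supplies the $k$ edge-disjoint $s$-$t$ paths, and the one-to-one correspondence between these paths and the edges $e_1,\ldots,e_k$ at $t$ finishes part (i). Combining (i) with part (ii) and Definition~\ref{def:NPCfeasible} yields NPC feasibility on $G$.
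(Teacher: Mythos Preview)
The paper states this theorem without proof---only the one-sentence remark following it---so there is no argument to compare against. Your proof of part (i) via the super-receiver $t$ and Menger's theorem is correct and is the standard ``edge fan lemma'' argument; the cut case analysis is sound (the parenthetical ``i.e.\ $J=\varnothing$'' in your last paragraph slightly conflates $\bar X\setminus\{t\}=\varnothing$ with $J=\varnothing$, but your actual case split on $J$ is what matters and is handled correctly). Your proof of part (ii) via the Hamiltonian cycle is also correct for the statement as written.

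One caveat worth flagging: the paper's remark immediately after the theorem reads ``The second condition ensures that there exists a tree in the graph $G$ which connects all nodes in $R$ \emph{without repeating edges from the $k$ edge disjoint paths} from $s$ to $R$.'' This indicates the authors intend (ii) in the stronger sense that the receiver tree be edge-disjoint from the $L_i$'s. Your argument does not establish that: you build the $L_i$ by Menger with no control over which edges they use, and separately invoke the Hamiltonian cycle for the tree, so the two may overlap. If you want to meet the paper's intended reading, you would need to couple the two constructions---for instance, first fix a receiver-spanning arc of the Hamiltonian cycle and then argue (or arrange) that $k$ edge-disjoint $s$--$r_i$ paths can still be found in the remaining graph. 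As the theorem is literally stated, however, your proof is complete; the gap, if any, is in the paper's formulation rather than in your argument.
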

The second condition ensures that there exists a tree in the graph $G$ which connects all nodes in $R$ without repeating edges from the $k$ edge disjoint paths from $s$ to $R$.

\subsection{Regular Graphs and NPC Feasibility}
We look to establish conditions on regular graphs where it is possible to apply NPC according to Definition~\ref{def:NPCfeasible}.
\begin{theorem}
Let $G$ be a regular graph with minimum degree $k$. Then $G$ has a $k$ edge
disjoint paths if and only if the min-cut separating a source from a sink
is of at least $k$.
\end{theorem}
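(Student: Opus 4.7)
The plan is to recognize this statement as essentially a specialization of Menger's edge-connectivity theorem (equivalently, the integral max-flow min-cut theorem) to the regular-graph setting, and to use the regularity hypothesis to pin down the relevant quantity.

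For the forward direction, I would proceed by contraposition: suppose some $s$--$r$ cut $C \subseteq E$ has $|C| < k$. Every $s$-$r$ path must use at least one edge of $C$, so by the pigeonhole principle any family of $s$--$r$ paths using each edge of $C$ at most once has size less than $k$. Hence there cannot be $k$ edge-disjoint $s$--$r$ paths, contradicting the assumption. This direction does not really need regularity.

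For the converse, I would invoke the edge version of Menger's theorem (or equivalently integral max-flow min-cut on the unit-capacity network obtained by orienting each undirected edge both ways with capacity one): the maximum number of pairwise edge-disjoint $s$--$r$ paths equals the minimum number of edges in an $s$--$r$ edge cut. Since the min-cut is at least $k$ by hypothesis, there exist at least $k$ edge-disjoint $s$--$r$ paths. The role of the regularity assumption is to guarantee that this value $k$ is actually attainable as a min-cut: in a $k$-regular graph the trivial cut consisting of the $k$ edges incident to $s$ (or to $r$) has exactly $k$ edges, so the min-cut is at most $k$, and the hypothesis forces equality. Consequently the maximum number of edge-disjoint paths is exactly $k$, and in particular $k$ such paths exist.

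The main obstacle is mostly an expository one: Menger's theorem delivers the conclusion immediately, so the proof is short, but I would need to be careful to state which form of Menger/max-flow min-cut is being used (edge version on undirected graphs), and to point out that regularity is used only to ensure that the hypothesis ``min-cut $\geq k$'' is compatible with the minimum degree bound $\kappa_e(G) \leq \delta(G) = k$. I would also tie the conclusion back to Definition~\ref{def:NPCfeasible} by noting that the $k$ edge-disjoint source-to-sink paths obtained this way are precisely what NPC requires for its encoding/decoding operations, so the theorem characterizes when NPC is feasible for a single source-sink pair in a regular graph in terms of the elementary min-cut condition.
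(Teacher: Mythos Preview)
Your proposal is correct, and in fact the paper does not supply a proof for this theorem at all: it is stated and immediately followed by a reference to Fig.~\ref{fig:regularGNPC} and the subsequent negative lemma, with no argument given. So there is nothing to compare against on the paper's side.

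Substantively, your approach is the natural one. The biconditional is exactly the edge version of Menger's theorem (equivalently, integral max-flow min-cut with unit capacities), and your contrapositive for the forward direction together with the direct invocation of Menger for the converse is the standard proof. Your observation about the role of regularity is also on point and worth keeping: the equivalence ``$k$ edge-disjoint $s$--$r$ paths $\Longleftrightarrow$ every $s$--$r$ cut has size $\geq k$'' holds in any graph, so regularity is not needed for the iff itself; what $k$-regularity buys is that the trivial cut $\delta(s)$ has size exactly $k$, forcing the min-cut to equal $k$ rather than merely bounding it. You might state this explicitly as a remark so the reader sees that the theorem is really Menger plus the degree bound $\kappa_e(G)\le\delta(G)=k$. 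One small caution: when you tie this back to Definition~\ref{def:NPCfeasible}, note that the definition also requires the receiver set to span a tree disjoint from the working paths, which the $s$--$r$ min-cut condition alone does not guarantee (this is precisely the point of the lemma that follows in the paper), so keep that link limited to the single source--single sink case.
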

As shown in Fig.~\ref{fig:regularGNPC}, the degree of each node is three and the min. cut separating the source node $s$ from the receivers $R$ is also three. However, we have the following negative result about NPC feasibility in regular graphs.
\begin{lemma}
There are regular graphs with node degree k, in which NPC is not feasible.
\end{lemma}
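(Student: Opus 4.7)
The plan is to give an explicit counterexample, since the statement is purely existential. By Lemma~\ref{lem:kconnectedNPCfeasible}, NPC between a single source and a single receiver is feasible in a nontrivial graph if and only if the graph is $k$-edge-connected, so it is enough to exhibit a connected $k$-regular graph that fails to be $k$-edge-connected, i.e.\ contains an edge-cut of size strictly less than $k$.

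For any $k\geq 3$ I would build $G$ by the standard ``barbell'' trick: take two vertex-disjoint copies of $K_{k+1}$, call them $H_1$ on $\{a_0,a_1,\ldots,a_k\}$ and $H_2$ on $\{b_0,b_1,\ldots,b_k\}$; delete one edge from each copy (say $a_0a_1$ and $b_0b_1$); and add two bridge edges $a_0b_0$ and $a_1b_1$. Each vertex either keeps its original degree or loses one intra-copy edge and gains one cross edge, so every vertex still has degree $k$ and $G$ is $k$-regular. The set $\{a_0b_0,a_1b_1\}$ forms an edge-cut of size $2$ separating $V(H_1)$ from $V(H_2)$.

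To finish, designate $s=a_2$ and $r=b_2$; by Menger's theorem the number of edge-disjoint $s$--$r$ paths equals the minimum $s$--$r$ cut, which is at most $2<k$. Thus condition (iii) of Definition~\ref{def:NPCfeasible} fails and NPC is infeasible on $G$, even though $G$ is $k$-regular. The same template with $1\leq m<k$ bridges (and a matching of size $m$ deleted from each side) gives a $k$-regular counterexample at every connectivity gap, which emphasizes that regularity of degree $k$ is strictly weaker than $k$-edge-connectivity.

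The one place that needs care is bookkeeping the degrees after the delete/add swap and verifying that the bridge set is actually the global minimum cut; both are immediate for this particular graph, so the real ``work'' is choosing the right witness rather than analyzing it. I do not anticipate a substantive obstacle beyond that, because once the witness is in hand the conclusion is a one-line invocation of Lemma~\ref{lem:kconnectedNPCfeasible}.
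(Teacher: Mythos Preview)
Your proposal is correct and follows essentially the same strategy as the paper: exhibit a $k$-regular ``barbell'' graph with a small edge-cut so that the required $k$ edge-disjoint paths cannot exist. The paper merely points to a single $3$-regular $10$-vertex example (two five-vertex pieces joined by a bridge) and its figure, whereas your $K_{k+1}$-barbell works uniformly for all $k\geq 3$ and makes the degree bookkeeping and the Menger/cut argument explicit; so your write-up is strictly more informative, but the underlying idea is the same.
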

\begin{proof}
A certain example to prove this lemma would be a graph of $10$ nodes, each of degree three, separated into two equally components connected with an edge, see Fig.~\ref{fig:regularGNPC}.
\end{proof}

\begin{figure}
\begin{center}  \includegraphics[scale=0.4]{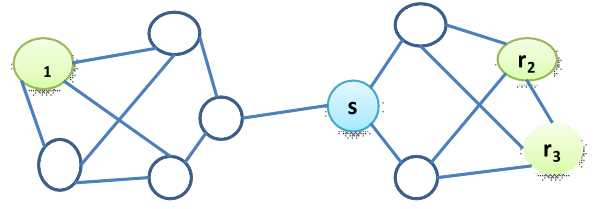}
  \caption{An example of a regular graph of node degree three, in which NPC is not feasible/valid from a source node $s$ to receiver ndoes $r_1,r_2$ and $r_3$. The receivers do not share a tree after removing the node $s$.}\label{fig:regularGNPC}
\end{center}\end{figure}

\section{Graph Construction}\label{sec:graphconstruction}
We will construct graphs with a minimum number of edges for given certain number of vertices $n$ and edge disjoint paths (connections) $k$, in which NPC can be deployed.  Let $h_k(n)$ denote the minimum number of edges that a $k$-connected graph on $n$ vertices must have.
It is shown by F. Harary in 1962 that one can construct a $k$-connected graph $H_{k,n}=(\textbf{V},E)$ on $|\textbf{V}|=n$ vertices that has exactly
$|E|=\lceil\frac{kn}{2}\rceil$ edges for $k \geq 2$. The construction begins
with an $n$-cycle graph, whose vertices $\textbf{V}$ are consecutively numbered
$v_0,v_1,v_2,\ldots, v_{n-1}$ clockwise.
The proof of the following Lemma is shown in~\cite[Proposition
5.2.5.]{gross99}.
\begin{lemma}
Let $G$ be a $k$-connected graph with $n$ nodes. Then, the number of edges in
$G$ is at least $\lceil \frac{kn}{2}\rceil$. That is, $h_k(n) \geq \lceil
\frac{kn}{2} \rceil$.
\end{lemma}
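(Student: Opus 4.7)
The plan is to derive the bound from a lower bound on the minimum degree of a $k$-connected graph, combined with the handshaking lemma. This is the standard route for this classical inequality, and all the machinery needed is elementary.

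First, I would argue that every vertex $v$ of a $k$-connected graph $G$ on $n \geq k+1$ vertices satisfies $\deg(v) \geq k$. Suppose, for contradiction, that some vertex $v$ has $\deg(v) = d < k$. Let $N(v)$ denote its neighborhood. Removing the $d < k$ vertices of $N(v)$ from $G$ leaves $v$ with no remaining edges, hence isolated from the other $n - d - 1 \geq 1$ vertices (using $n \geq k+1 > d+1$). This vertex cut has size $d < k$, contradicting $k$-connectivity. Thus $\delta(G) \geq k$.

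Next, I would apply the handshaking lemma: $\sum_{v \in \mathbf{V}} \deg(v) = 2|E|$. Combining with the degree bound gives $2|E| \geq kn$, so $|E| \geq kn/2$. Since $|E|$ is an integer, this strengthens to $|E| \geq \lceil kn/2 \rceil$, which is precisely $h_k(n) \geq \lceil kn/2 \rceil$.

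The only real subtlety, and thus the main obstacle, is the degenerate boundary case $n \leq k$, where the notion of $k$-connectivity is sometimes taken vacuously (e.g., $K_n$ is conventionally declared $k$-connected only for $k \leq n-1$). I would handle this by noting that a $k$-connected graph necessarily has $n \geq k+1$ vertices (so the argument above applies verbatim), or, if the author's convention allows $n \leq k$, then $G$ must be a complete graph, for which $|E| = \binom{n}{2} \geq \lceil kn/2 \rceil$ also holds. Either way the inequality is established, matching the Harary construction $H_{k,n}$ which achieves it with equality.
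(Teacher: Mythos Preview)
Your argument is correct and is precisely the standard one; the paper does not actually prove this lemma but merely cites \cite[Proposition~5.2.5]{gross99}, where the same minimum-degree-plus-handshaking reasoning appears. One small correction to your boundary remark: if some convention did allow $n\le k$, the claimed inequality $\binom{n}{2}\ge\lceil kn/2\rceil$ would in fact \emph{fail} (e.g.\ $n=k=3$ gives $3<5$), so that branch is not a valid fallback---but the paper's definition of $\kappa_v(G)$ forces $\kappa_v(K_n)=n-1$ and hence $n\ge k+1$, so your main argument already covers everything needed.
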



\begin{algorithm}[t!]
\SetLine%
\KwIn{Two positive integers $k$ and $n$, number of connections and vertices, such that $k <n$.}%
\KwOut{An optimal $H_{k,n}$ Harary graph for NPC. \\}%

\textbf{Begin:}
Scatter the $n$ isolated nodes \;
Let $r=\lfloor k/2\rfloor$. \;

The construction of $H_{2r,n}$ Harary graph.\;
 \ForEach{i=0 to n-2}
  { \ForEach {j=i+1 to n-1}
    {  \lIf{ $j-i \leq r $ OR $n+i-j \leq r $}\\
    {Create an edge between vertices $v_i$ and $v_j$.}

    }
   }
\uIf{$k$ is even}
{
    Return graph $H$ for NPC.
}
   \Else
   {
    \uIf{$n$ is even}
    {
\ForEach{i=0 to $\frac{n}{2}-1$} {Create an edge between vertex $v_i$ and
vertices $v_{i+\frac{n}{2}}$}
    }
    \Else
    {Create an edge from vertex $v_0$ to vertex $v_\frac{n-1}{2}$\;
     Create an edge from vertex $v_0$ to vertex $v_\frac{n+1}{2}$\;
     \ForEach{i=1 to $\frac{n-3}{2}$}
     {Create an edge between vertex $v_i$ and vertex $v_{i+\frac{n+1}{2}}$}
    }
Return graph H for NPC
   }

\mbox{}\\ \caption{Construction of an optimal graph for NPC with given n
vertices, k connections and minimum number of edges.}
\label{alg:hararyNPC}
\end{algorithm}

From Algorithm~\ref{alg:hararyNPC}, one can ensure that there are $k$ edge
disjoint paths between any two nodes (one is sender and one is receiver).
In addition, there are $k$ edge disjoint paths from any node, which acts as a
source, and $k$ different nodes, which act as receivers. All nodes are
connected together with a loop. Therefore, NPC can be deployed to such
graphs.


Due to the fact that Harary's graph is $k$-connected~\cite[Theorem 5.2.6.]{gross99}, then using our previous result, we can deduce that NPC is feasible for such graphs.
Harary's graphs are optimal for the NPC construction in the sense that they are
$k$-edge connected graphs with the fewest possible number of edges.
%
%
\section{Conclusion}\label{sec:conclusion}
In this paper, we proposed  graph topologies for  network protection using network coding.  We derived bounds on the minimum number of edges and showed a method to  construct optimal network graphs. \nix{ based on Harary graph constructions. Our future work will include practical aspects of the proposed models.}

\emph{
\noindent Network protection is much easier than human protection against failures. S.~A.~A.
}
\bibliographystyle{plain}

\bibliographystyle{ieeetr}

\section*{Appendix:  Essential Definitions}\label{sec:definitions}
\nix{In this Appendix, we  state  some essential definitions.} We
assume that all graphs stated in this paper are undirected (bi-directional edges) unless stated
otherwise. We  define the
edge-connectivity and node-connectivity of a graph $G$ as follows.
%
%
\begin{definition}[Edge-cut and node-cut] Given an undirected connected graph $G$, an edge-cut in a graph $G$ is a set of edges such that its removal
    disconnects the graph. A node-cut in  $G$ is a set of nodes such that its removal disconnects the graph.
\end{definition}


%
\begin{definition}[node-edge-connectivity]
The edge-connectivity of a connected graph $G$, denoted $\kappa_e(G)$ is
the size of a smallest edge-cut. Also, the node-connectivity  of a
connected graph $G$, denoted $\kappa_v(G)$ is the minimum number of
vertices whose removal can either disconnect the graph $G$ or reduce it to
a one-node graph.
\end{definition}

The connectivity measures $\kappa_v(G)$ and $\kappa_e(G)$ are used in a quantified model of network survivability, which is the capacity of a network to retain connections among its nodes after some edges or nodes are removed.


\begin{definition}[k-connected graph]
a graph $G$ is k-node connected if $G$ is connected and $\kappa_v(G) \geq k$.
Also, a graph $G$ is $k$-edge connected if $G$ is connected and every
edge-cut has at least $k$ edges, $\kappa_e(G) \geq k$.
\end{definition}

We define two internal connections (paths) between nodes $u$ and $v$ in a graph $G$ to be internally edge
disjoint if they have no edge in common. This is also different from the node disjoint paths. \nix{, in which no two disjoint paths share a common node.} Throughout this paper, a path $\langle u,v \rangle$ from a starting node $u$ to an ending node $v$ is a walk in a graph $G$~\cite{gross99}, i.e., it does not contain the same node or edge twice.
\end{document}